\def\*#1{\bm{#1}}
\def\bSig\mathbf{\Sigma}
\renewcommand{\vec}[1]{\mathbf{#1}}
\title{Novel Non-Negative Variance Estimator for (Modified) Within-Cluster Resampling}
\author{Daniel Xu$^*$\email{danielxu@wharton.upenn.edu}, Pamela Shaw$^{**}$\email{shawp@pennmedicine.upenn.edu}, and Ian Barnett$^{***}$\email{ibarnett@pennmedicine.upenn.edu} \\
Department of Biostatistics, Epidemiology, and Informatics \\
 University of Pennsylvania, Philadelphia, PA, U.S.A.}
\begin{document}

%  This will produce the submission and review information that appears
%  right after the reference section.  Of course, it will be unknown when
%  you submit your paper, so you can either leave this out or put in 
%  sample dates (these will have no effect on the fate of your paper in the
%  review process!)

\date{{\it }  {\it  }. {\it
 } }

%  These options will count the number of pages and provide volume
%  and date information in the upper left hand corner of the top of the 
%  first page as in published papers.  The \pagerange command will only
%  work if you place the command \label{firstpage} near the beginning
%  of the document and \label{lastpage} at the end of the document, as we
%  have done in this template.

%  Again, putting a volume number and date is for your own amusement and
%  has no bearing on what actually happens to your paper!  

\pagerange{\pageref{firstpage}--\pageref{lastpage}} 
\volume{}
\pubyear{}
\artmonth{}

%  The \doi command is where the DOI for your paper would be placed should it
%  be published.  Again, if you make one up and stick it here, it means 
%  nothing!

\doi{XX}

%  This label and the label ``lastpage'' are used by the \pagerange
%  command above to give the page range for the article.  You may have 
%  to process the document twice to get this to match up with what you 
%  expect.  When using the referee option, this will not count the pages
%  with tables and figures.  

\label{firstpage}

%  put the summary for your paper here

\begin{abstract}
This article proposes a novel variance estimator for within-cluster resampling (WCR) and modified within-cluster resampling (MWCR) -- two existing methods for analyzing longitudinal data. WCR is a simple but computationally intensive method, in which a single observation is randomly sampled from each cluster to form a new dataset. This process is repeated numerous times, and in each resampled dataset (or outputation), we calculate $\beta$ using a generalized linear model. The final resulting estimator is an average across estimates from all outputations. MWCR is an extension of WCR that can account for the within-cluster correlation of the dataset; consequently, there are two noteworthy differences: 1) in MWCR, each resampled dataset is formed by randomly sampling multiple observations without replacement from each cluster and 2) generalized estimating equations (GEEs) are used to estimate the parameter of interest. While WCR and MWCR are relatively simple to implement, a key challenge is that the proposed moment-based estimator is often times negative in practice. Our modified variance estimator is not only strictly positive, but simulations show that it preserves the type I error and allows statistical power gains associated with MWCR to be realized.
\end{abstract}

%  Please place your key words in alphabetical order, separated
%  by semicolons, with the first letter of the first word capitalized,
%  and a period at the end of the list.
%

\begin{keywords}
Generalized linear model; Generalized estimating equations; Within-cluster resampling; Modified within-cluster resampling; Multiple outputation
\end{keywords}

%  As usual, the \maketitle command creates the title and author/affiliations
%  display 

\maketitle

%  If you are using the referee option, a new page, numbered page 1, will
%  start after the summary and keywords.  The page numbers thus count the
%  number of pages of your manuscript in the preferred submission style.
%  Remember, ``Normally, regular papers exceeding 25 pages and Reader Reaction 
%  papers exceeding 12 pages in (the preferred style) will be returned to 
%  the authors without review. The page limit includes acknowledgements, 
%  references, and appendices, but not tables and figures. The page count does 
%  not include the title page and abstract. A maximum of six (6) tables or 
%  figures combined is often required.''

%  You may now place the substance of your manuscript here.  Please use
%  the \section, \subsection, etc commands as described in the user guide.
%  Please use \label and \ref commands to cross-reference sections, equations,
%  tables, figures, etc.
%
%  Please DO NOT attempt to reformat the style of equation numbering!
%  For that matter, please do not attempt to redefine anything!

\section{Introduction}
\label{sec:intro}

Generalized estimating equations (GEEs) have become one of the most popular marginal methods for analyzing longitudinal data \citep{liang1986longitudinal}. Through the use of sandwich variance estimators and working correlation matrices, GEEs provide a semiparametric approach that accounts for all observations per subject and is robust to model misspecification \citep{huber1967behavior, eicker1963asymptotic}. However, despite the power and popularity of GEEs, they are not suitable for all cases. One such case occurs when cluster sizes are large compared to the number of individuals, which is typically the case in digital phenotyping studies in which a few individuals are followed for long periods of time \citep{barnett2018beyond}. In these instances, the sandwich estimator becomes numerically unstable, and a regularized sandwich estimator may be used instead in order to shrink the sample correlation matrix towards the working correlation matrix \citep{warton2011regularized}. Another case in which GEEs fail is when cluster sizes are informative, which can lead to biased estimators under the standard GEE approach \citep{wang2011inference}. This concern has led to the development of two modified approaches that are asymptotically equivalent -- a Monte Carlo within-cluster sampling approach and a cluster-weighted GEE approach \citep{hoffman2001within, williamson2003marginal}.

Multiple outputation (MO) is the term used by Follmann et al. to describe the within-cluster resampling approach (WCR) by Hoffman et al., which allows us to perform statistical procedures typically requiring independent data on clustered data. The MO estimate of the parameter of interest is the average of the parameter across multiple outputations, where in each outputation, the parameter is estimated from a dataset generated by randomly selecting a single observation per individual \citep{follmann2003multiple}. Though computationally intensive, MO provides a simple approach to analyzing clustered data and consequently has been used in genetic association studies as well as clinical studies \citep{askie2010inhaled, tian2007comparing}. The key difference between MO and WCR is that WCR is often used in the context of generalized linear models, whereas MO generally has a broader scope.

However, the key drawback of MO or WCR is that in each outputation, only a single observation is randomly sampled per subject and the rest of the data is discarded. Therefore, we effectively ignore the within-subject correlation, which may contain information to improve statistical power. \citet{chiang2008efficient} specifically address this issue -- their approach is known as modified within-cluster resampling (MWCR), and in each outputation, GEEs are used with multiple observations per subject to calculate the parameter of interest.

Despite these advances, there are still certain limitations of MWCR, which are largely associated with the fact that MWCR has been studied primarily in the large sample, asymptotic limit. We have identified that a key obstacle to successfully implementing MWCR is that while the moment-based variance estimator proposed by Hoffman is easy to compute, it is not strictly positive. This can lead to significant issues when the number of outputations is not large enough.

In this paper, we present a nonnegative, stabilized variance estimator for WCR and MWCR. To present our methodology, we first develop the appropriate terminology for WCR and MWCR before presenting our new variance estimator in Section 2. In Section 3, we apply our methodology to a simulated dataset. We find that our new variance estimator largely preserves the type I error rate and allows increases in statistical power to be realized when the correct working correlation structure is chosen for MWCR. Finally, in Section 4, we compare the performance of our new variance estimator to more traditional methods with a longitudinal dataset of schizophrenia patients.

\section{Model and notation}
\begin{figure*}[h]
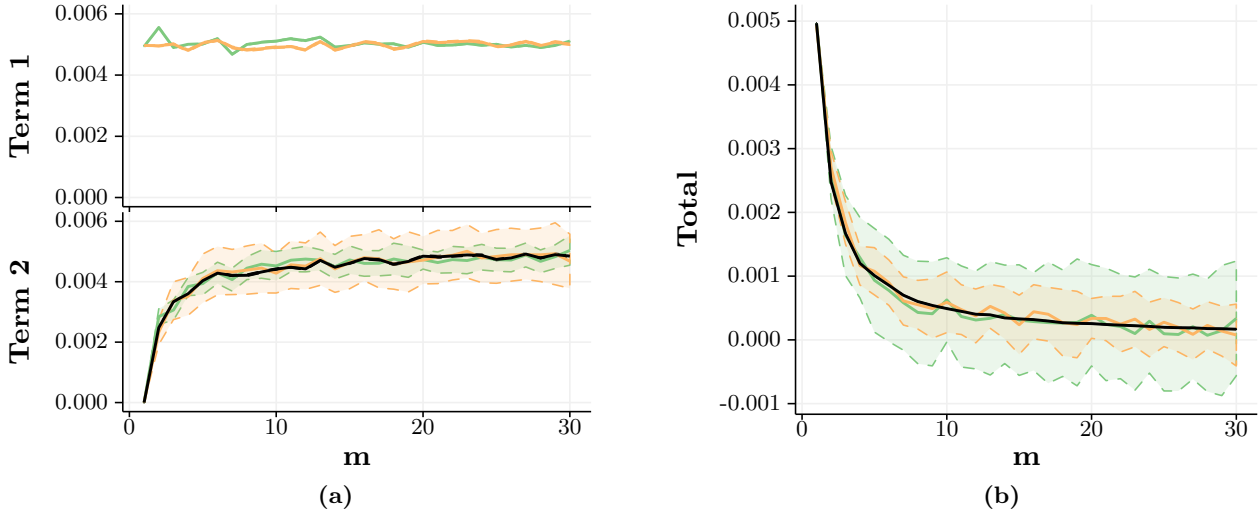

  \centering
  \begin{subfigure}{.5\textwidth}
    \input{term1and2.tex}
    \caption{}
    \label{fig:negvarind}
  \end{subfigure}%
  \begin{subfigure}{.5\textwidth}
    \input{totalvar.tex}
    \caption{}
    \label{fig:negvartotal}
  \end{subfigure}%
  \caption{Illustration of performance of our stabilized variance estimator compared to the traditional moment-based approach. Data was generated with an "exchangeable" correlation structure via a random intercepts model with $n = 1000$, and consequently analyzed with MO-2 with an assumed "independence" working correlation structure, where m was varied from 2 to 30. The orange line represents the traditional moment-based estimator with $M = 250$, the green line represents the traditional moment-based estimator with $M = 1000$, and the black line represents our stabilized variance estimator with $M = 250$. In (a), we plot the first ($\bm{\bar{\hat{\Sigma}}}$) and second term ($\vec{S}^2_{\bm{\hat{\beta}} | \vec{\underline{X}}}$) of Equation \ref{eq:momentbasevar}. The variability associated with the second term in particular is evident given the finite number of outputations. In (b), the total variance (i.e. $\bm{\bar{\hat{\Sigma}}} - \vec{S}^2_{\bm{\hat{\beta}} | \vec{\underline{X}}}$) is plotted. In all figures, confidence bands represent $\pm2$ SDs associated with 50 repeats.}
  \label{fig:negvar}
\end{figure*}
\subsection{Modified within-cluster resampling}
\label{sec:Notation}
Given a dataset containing $n$ independent subjects and $m_i$ observations for subject $i$, we can denote the data by the long row vectors $\vec{\underline{Y}} = (Y_{11},...,Y_{1m_1},...,Y_{21},...,Y_{2m_2},...,Y_{n1},...,Y_{nm_n})$ and $\vec{\underline{X}} = (\vec{X}_{11},...,\vec{X}_{1m_1},...,\vec{X}_{21},...,\vec{X}_{2m_2},...,\vec{X}_{n1},...,\vec{X}_{nm_n}),$ where $Y_{ij}$ is the outcome and $\vec{X}_{ij}$ is a $p \times 1$ covariate vector for the $j$th observation for person $i$. Because the number of observations may vary across individuals, we denote $m$ as $min(m_i)$. In the MWCR approach described by \citet{chiang2008efficient}, we would first randomly sample $B$ observations per person, where $2 \leq B \leq m$. Consequently, we let the long row vector $\vec{\underline{J}} = (\vec{J}_{1},\vec{J}_{2},...,\vec{J}_{n})$ signify the indices of the observations chosen in each outputation, where $\vec{J}_{i}$ represents the indices of the observations chosen for person $i$.

From here, following the approach by \citet{liang1986longitudinal}, for a single outputation, we let $\bm{\hat{\beta}}\{\vec{X(J)}\}$ represent the solution to the generalized estimating equation
\begin{equation*}
    \sum_{i=1}^{n} \vec{D}^T_i \vec{V}^{-1}_i (\vec{Y}_i-\bm{\mu}_i) = 0
\end{equation*}
where $\vec{D_i}=\frac{\delta \bm{\mu}_i}{\delta \bm{\beta}^T}$ and $\vec{V}_i$ is the $n_i \times n_i$ working covariance matrix. More specifically, let $\vec{V}_i$ be defined as
\begin{equation*}
    \vec{V}_i=\vec{V}^{1/2}_{M_i}\vec{R}_{i}(\bm{\alpha})\vec{V}^{1/2}_{M_i}
\end{equation*}
where $\vec{V}_{M_i}=diag\{\phi a^{-1}_{ij} v(\mu_{ij}) \}$ is the marginal variance of $\vec{Y}_i$ and $\vec{R}_i(\bm{\alpha})$ is the working correlation matrix.

By repeating this random sampling process for every possible combination of $N$ observations, we could calculate all values of $\bm{\hat{\beta}}$ given $\bm{\underline{X}}$. Thus, it follows that $E[\bm{\hat{\beta}}\{\vec{X(J)}\} | \bm{\underline{X}}]$ represents the expected multiple outputation (EMO) estimate
\begin{equation*}
    \bm{\bar{\hat{\beta}}}^{\infty} = \frac{\sum_{\vec{J}}\bm{\hat{\beta}}\{\vec{X(J)}\}}{\prod \binom{m_i}{B}}.
\end{equation*}

Often times, however, there are too many samples to compute, so we define the multiple outputation (MO) estimate of $\bm{\beta}$ from $M$ outputations to be
\begin{equation*}
    \bm{\bar{\hat{\beta}}}^{M} = \sum_{k=1}^{M}\frac{\bm{\hat{\beta}}\{\vec{X(J)}\}}{M}.
\end{equation*}

Following the variance decomposition approach by \citet{hoffman2001within} and \citet{follmann2003multiple}, we can see that the variance for our estimate is
\begin{equation}
\begin{split}
    \hat{var}_{MO} (\bm{\bar{\hat{\beta}}}^{M}) &= \sum_{k = 1}^{M} \frac{\bm{\hat{\Sigma}}_{k}}{M}- \sum_{k = 1}^{M} \frac{(\bm{\hat{\beta}}_{k} - \bm{\bar{\hat{\beta}}}^{M})^2}{M-1}\\
    &= \bm{\bar{\hat{\Sigma}}} - \vec{S}^2_{\bm{\hat{\beta}} | \vec{\underline{X}}},
\end{split}
\label{eq:momentbasevar}
\end{equation}
where $\bm{\hat{\Sigma}}_{k}$ represents the covariance matrix of the $kth$ outputation and $\vec{S}^2_{\bm{\hat{\beta}} | \vec{\underline{X}}}$ represents the sample covariance of the $\bm{\hat{\beta}}$'s. As a final consideration, because $B$ can take on a range of values from 2 to $m$, we refer to this method collectively as MO-B.

Throughout this paper, if $B = 1$, we refer simply to the MO scenario. This describes the WCR approach proposed by \citet{hoffman2001within}, where $\bm{\hat{\beta}}\{\vec{X(J)}\}$ for each outputation is the maximum likelihood estimator for a generalized linear model based on $n$ independent observations. Everything else -- including the variance for the MO estimate -- remains the same regardless of $B$.

%  The \backmatter command formats the subsequent headings so that they
%  are in the journal style.  Please keep this command in your document
%  in this position, right after the final section of the main part of 
%  the paper and right before the Acknowledgements, Supplementary Materials,
%  and References sections.

\subsection{Negative variance estimator}
Though the moment-based variance estimator in Equation \ref{eq:momentbasevar} is straightforward to calculate, in practice, it can lead to negative values a significant portion of the time, which is illustrated in Figure \ref{fig:negvartotal}. This problem can be mitigated by choosing a substantially large enough M, but as \cite{follmann2003multiple} suggest, even in the simulated data we used to generate Figure \ref{fig:negvar} when $m = 30$, we would need to choose $M > 625,000$ to ensure
\begin{equation}
    Pr(\abs{Z^{M}-Z^{\infty}}<.02|\bm{\underline{X}})=.95,
\label{eq:Mcalc}
\end{equation}
 where $Z^M = \bm{\bar{\hat{\beta}}}^{M}/\sqrt{\hat{var}_{MO} (\bm{\bar{\hat{\beta}}}^{M})}$. As the number of observations per person grows and $\vec{S}^2_{\bm{\hat{\beta}} | \vec{\underline{X}}} \to \bm{\bar{\hat{\Sigma}}}$, the number of outputations needed to satisfy Equation \ref{eq:Mcalc} increases nontrivially.

\begin{theorem}\label{thm1}
If $\bm{\hat{\beta}}$ represents the OLS estimate, the expectation of the second term in equation \eqref{eq:momentbasevar} with $M=\infty$ is
\begin{equation*}
    E[\vec{S}^2_{\bm{\hat{\beta}} | \vec{\underline{X}}}] = \bm{\bar{\hat{\Sigma}}} \left(1 - \frac{1}{n}\sum_{i = 1}^{n}\frac{B}{m_i}\right).
\end{equation*}
\end{theorem}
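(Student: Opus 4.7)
The plan is to reduce the claim to an identity that holds conditional on the data and then average. Since the sample variance of $M$ independent draws has expectation equal to the population variance, for any $M\ge 2$ we have $E_{\vec{\underline{J}}}\bigl[\vec{S}^2_{\bm{\hat{\beta}}\mid\vec{\underline{X}}}\mid\vec{\underline{X}},\vec{\underline{Y}}\bigr] = \operatorname{Var}_{\vec{\underline{J}}}\bigl[\bm{\hat{\beta}}\{\vec{X(J)}\}\mid\vec{\underline{X}},\vec{\underline{Y}}\bigr]$, while $\bm{\bar{\hat{\Sigma}}} \to E_{\vec{\underline{J}}}[\bm{\hat{\Sigma}}_J\mid\vec{\underline{X}},\vec{\underline{Y}}]$ a.s.\ as $M\to\infty$ by the SLLN. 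So it suffices to prove the conditional identity
\begin{equation*}
\operatorname{Var}_{\vec{\underline{J}}}\bigl[\bm{\hat{\beta}}\{\vec{X(J)}\}\bigr] \;=\; E_{\vec{\underline{J}}}\bigl[\bm{\hat{\Sigma}}_J\bigr]\left(1-\frac{1}{n}\sum_{i=1}^{n}\frac{B}{m_i}\right),
\end{equation*}
with all expectations taken over the outputation sampling mechanism and with the data held fixed.

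The first main step is to write the OLS estimator in terms of the sampling indicators $Z_{ij}^{(k)}\in\{0,1\}$ with $\sum_j Z_{ij}^{(k)}=B$, namely $\bm{\hat{\beta}}_k=\bigl(\sum_{i,j}Z_{ij}^{(k)}\vec{X}_{ij}\vec{X}_{ij}^{T}\bigr)^{-1}\sum_{i,j}Z_{ij}^{(k)}\vec{X}_{ij}Y_{ij}$. Under SRSWOR within each cluster the $Z_{ij}$'s have $P(Z_{ij}=1)=B/m_i$, cluster-$i$ pairwise inclusion $P(Z_{ij}Z_{ij'}=1)=B(B-1)/[m_i(m_i-1)]$, and independence across clusters; these moments generate exactly the finite-population correction factor $(1-B/m_i)$ in the cluster-$i$ contribution to any Horvitz--Thompson-type variance.

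The second step computes the two sides. Because the theorem is stated for OLS, I would linearize $\bm{\hat{\beta}}_k$ about the full-sample OLS $\bm{\tilde{\beta}}$ and replace the random Gram matrix $\sum_{i,j}Z_{ij}^{(k)}\vec{X}_{ij}\vec{X}_{ij}^{T}$ by its $\vec{\underline{J}}$-expectation (equivalently, treat the outer $(X^T X)^{-1}$ as common across outputations to leading order; this is exact when clusters are balanced $m_i\equiv m$, and asymptotically correct in general). Then $\bm{\hat{\beta}}_k$ becomes a linear functional of the sampled residuals, so $\operatorname{Var}_{\vec{\underline{J}}}(\bm{\hat{\beta}}_k)$ decomposes as a sum of within-cluster SRSWOR variances, each carrying its own $(1-B/m_i)$. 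Simultaneously, the standard OLS variance estimator factors as $\bm{\hat{\Sigma}}_k=\hat{\sigma}_k^{2}(X_k^{T}X_k)^{-1}$, and $E_{\vec{\underline{J}}}[\bm{\hat{\Sigma}}_k]$ reduces, via the analogous $Z_{ij}$-moment calculation, to a weighted sum of the same per-cluster quantities without the FPC factor. Taking the ratio collapses the per-cluster variance terms and leaves precisely $\bigl(1-\tfrac{1}{n}\sum_i B/m_i\bigr)$.

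The main obstacle is the $(X_k^T X_k)^{-1}$ factor, which is simultaneously random in $\vec{\underline{J}}$ and entangled with the score sum; a clean proof needs either the balanced-cluster assumption (where $X_k^TX_k$ has constant expectation contributions across $i$ so the double sum $\sum_i s_i^{2}(1-B/m_i)$ factors as $\bar{s}^{2}\sum_i(1-B/m_i)$, matching the RHS exactly) or a Taylor-expansion/large-$n$ argument that controls the leading term. Once that reduction is accepted, the remaining bookkeeping (matching $\hat{\sigma}_k^{2}$ under working independence to the pooled within-cluster variance, and verifying the algebra in the matrix version) is routine and follows the same pattern as the classical FPC calculation for stratified SRSWOR.
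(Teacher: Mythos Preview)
Your route is workable but genuinely different from the paper's, and one of your framing choices creates extra work. The paper does \emph{not} condition on the realized data and compute a single-outputation SRSWOR variance; it integrates over the model so that $E(\hat{\bm{\beta}}_l\mid J_l)=\bm{\beta}$ and $\mathrm{Var}(\hat{\bm{\beta}}_l\mid J_l)=\bm{\Sigma}$, then expands $E[\vec{S}^2_{\hat{\bm{\beta}}\mid\vec{\underline{X}}}]=\bm{\Sigma}-E\{E(\hat{\bm{\beta}}_j\hat{\bm{\beta}}_l^T\mid J_j,J_l)\}+\bm{\beta}\bm{\beta}^T$ for two independent outputations $J_j,J_l$. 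The key device is to condition on the overlap size $b=|J_j\cap J_l|$: for OLS the cross-covariance factors through the estimator $\hat{\bm{\beta}}_{jl}$ built on the shared rows, giving $(X_j^TX_j)^{-1}(X_{jl}^TX_{jl})\,\mathrm{Var}(\hat{\bm{\beta}}_{jl})\,(X_{jl}^TX_{jl})(X_l^TX_l)^{-1}$; the same Gram-matrix approximation you invoke reduces this to $(b/B)\bm{I}_p\cdot(B/b)\bm{\Sigma}\cdot(b/B)\bm{I}_p=(b/B)\bm{\Sigma}$, and the hypergeometric mean $E[b]=B^2/m$ yields $1-B/m$ in one line. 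Your indicator/FPC calculation reaches the same factor, but because you hold $(\vec{\underline X},\vec{\underline Y})$ fixed you must separately argue that $E_{\vec{\underline J}}[\hat{\bm{\Sigma}}_k\mid\text{data}]$ and $\mathrm{Var}_{\vec{\underline J}}[\hat{\bm{\beta}}_k\mid\text{data}]$ are proportional---$\hat{\bm{\Sigma}}_k$ estimates model variance, not resampling variance, so this link is not automatic and effectively requires averaging over $Y$ anyway. In short: the paper's overlap-conditioning trick is algebraically cleaner and sidesteps that issue; your FPC decomposition is more transparent about where $(1-B/m_i)$ originates but carries an extra approximation you would need to justify.
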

\begin{proof} First we define $J_k$, $\bm{\beta}_k$, $\bm{\beta}_{k-l}$, and $\bm{\beta}_{kl}$.

$\begin{array}{rl}
E[\vec{S}^2_{\bm{\hat{\beta}} | \vec{\underline{X}}}] =& \bm{\Sigma}-\frac{1}{{m \choose B}^2} \sum_{J_j,J_l} E\left(\hat{\bm{\beta}}_j\hat{\bm{\beta}}_l^T\vert J_j,J_l\right)\\ 
&+ \frac{1}{{m \choose B}} \sum_{J_l} E\left(\hat{\bm{\beta}}_l \vert J_l\right)E\left(\hat{\bm{\beta}}_l^T \vert J_l\right) \\
=& \bm{\Sigma} - E\left\{E\left(\hat{\bm{\beta}}_j\hat{\bm{\beta}}_l^T \vert J_j,J_l\right)\right\} + \bm{\beta}\bm{\beta}^T \\
=& \bm{\Sigma} - \sum_{b=0}^B E\left(\hat{\bm{\beta}}_j\hat{\bm{\beta}}_l^T-\bm{\beta}\bm{\beta}^T \bigg\vert |J_j \cap J_l| = b \right) \\
& \cdot \mbox{pr}\left(|J_j \cap J_l| = b\right) \\
=& \bm{\Sigma} - \sum_{b=0}^BE\left(\left(X_j^TX_j\right)^{-1}\left(X_{jl}^TX_{jl}\right)\right. \\ 
& \left(\hat{\bm{\beta}}_{jl}\hat{\bm{\beta}}_{jl}^T-\bm{\beta}\bm{\beta}^T\right) \\
& \left.\left(X_{jl}^TX_{jl}\right)\left(X_l^TX_l\right)^{-1} \bigg\vert |J_j \cap J_l| = b\right)  \\
& \cdot \mbox{pr}\left(|J_j \cap J_l| = b\right) \\
=&\bm{\Sigma} - \sum_{b=0}^B \left(\frac{b}{B} \bm{I}_p\right) \left(\frac{B}{b} (\bm{\Sigma} + O(M^{-1}))\right)\left(\frac{b}{B} \bm{I}_p\right) \\
& \cdot \mbox{pr}\left(|J_j \cap J_l| = b\right) \\
=& \bm{\Sigma} \left\{1-\sum_{b=0}^B \frac{b}{B} \mbox{pr}\left(|J_j \cap J_l| = b\right) \right\} \\
=& \bm{\Sigma} \left(1-\frac{B}{m}\right) \\
\end{array}$
\end{proof}

Though \cite{follmann2003multiple} propose a positive variance estimate when $\bm{\hat{\beta}}$ is an MLE, we note that this only applies to us when $B = 1$. Therefore, we propose a new, strictly, positive variance estimator, which is seen in Theorem \ref{thm1}. We find that this new estimator is exact in the OLS case and empirically correct in the GEE case. Figure \ref{fig:negvar} demonstrates that this new variance estimator closely follows the expected variance and has practically no variability, even in the GEE case ($B = 2$) when our result is not exact and when we use a relatively small number of outputations ($M = 250$).

\section{Simulation study}
\begin{table*}[ht]
\centering
\resizebox{\textwidth}{!}{%
\begin{tabular}{p{1.4 cm} | p{1.4cm} | p{1.7cm} | p{1.7cm} | c | p{1.7cm} | p{1.7cm} |c | p{1.7cm} | p{1.7cm} | p{1.7cm} |c | p{1.7cm} | p{1.7cm} | p{1.7cm}}
  \toprule
  \\
  \\
  & & \multicolumn{5}{c}{(a) $n = 1000$} & & \multicolumn{7}{c}{(b) $n = 10$} \\
  \cmidrule{3-7} \cmidrule{9-15}
  & & \multicolumn{2}{c}{$m = 10$} & & \multicolumn{2}{c}{$m = 30$} & &
  \multicolumn{3}{c}{$m = 10$} & & \multicolumn{3}{c}{$m = 30$}\\
  \cmidrule{3-4} \cmidrule{6-7} \cmidrule{9-11} \cmidrule{13-15}
 & & Moment-based & Stabilized & & Moment-based & Stabilized & & Stabilized & BC Stabilized & BC Moment-based & & Stabilized & BC Stabilized & BC Moment-based\\ 
  \midrule
 $\alpha = .05$ & MO & .066 & .049 & & .077 & .051 & & .031 & -- & -- & & .024 & -- & -- \\ 
 & MO-2 & .055 & .051 & & .074 & .045 & & .086 & .023 & .001 & & .080 & .015 & .000 \\ 
 & MO-3 & .061 & .059 & & .065 & .053 & & .078 & .020 & .008 & & .066 & .021 & .000 \\ 
 & MO-4 & .052 & .050 & & .049 & .047 & & .067 & .023 & .015 & & .075 & .026 & .001 \\ 
 & MO-5 & .053 & .053 & & .055 & .053 & & .071 & .029 & .028 & & .074 & .020 & .002 \\[.3 cm]
 $\alpha = .01$ & MO & .020 & .009 & & .026 & .012 & & .006 & -- & -- & & .004 & -- & -- \\
 & MO-2 & .010 & .010 & & .026 & .011 & & .026 & .002 & .000 & & .017 & .002 & .000 \\ 
 & MO-3 & .008 & .009 & & .016 & .012 & & .014 & .004 & .000 & & .019 & .002 & .000 \\ 
 & MO-4 & .012 & .012 & & .011 & .007 & & .016 & .002 & .001 & & .017 & .004 & .000 \\ 
 & MO-5 & .007 & .006 & & .014 & .010 & & .022 & .009 & .007 & & .014 & .002 & .000 \\
   \bottomrule
\end{tabular}}
\caption{Comparison of type I error of MO and MO-B for $2 \leq B \leq 5$ at $\alpha = .01$ and $\alpha = .05$. (a) demonstrates the effect of varying $m$ under the large sample size setting ($n = 1000$ subjects). For each condition, 2,000 simulations with $M \geq 2750$, $\sigma = 1$, $\tau = 1$ were conducted under the null hypothesis of no effect, and in each simulation, both the moment-based and stabilized variance estimator were recorded. In the case of the moment-based estimator, negative variances (which comprised of less than .75\% of the simulations for each condition) were discarded. (b) demonstrates the effect of varying $m$ under the small sample size setting ($n = 10$ subjects) with the same conditions ($M \geq 2750$, $\sigma = 1$, $\tau = 1$). For each condition, the type I error associated with our stabilized variance estimator was recorded. For MO-B, the type I error associated with the bias-corrected (BC) stabilized and moment-based variance estimators proposed by Mancl and DeRouen were also recorded; the traditional moment-based estimator is not because of the higher prevalence of negative variances in these simulations.}
\label{table:typeIerror}
\end{table*}

\subsection{Type I error} 
\label{sec:typeIerror}
To validate our proposed variance estimator, we conducted 2,000 simulations of the null distribution for different values of $m$. Data for each of these simulations was generated following the underlying random intercepts model $Y_{ij} = \beta_0 + b_i + \beta_{1}X_i + \epsilon_{ij}$, where $b_i \sim \mathcal{N}(0, \tau)$ and $e_{ij} \sim \mathcal{N}(0, \sigma)$. For MO-B, data was analyzed using an "independence" working correlation structure. Proportion of significant p-values under the null hypothesis of no association for $\alpha = .05$ and $\alpha = .01$ are summarized in Table \ref{table:typeIerror}a. In general, we find that at both the $\alpha = .05$ and $\alpha = .01$ levels, our stabilized variance estimator preserves the type I error, especially in cases of large cluster size when it appears that the moment-based estimator is anti-conservative.

\subsection{A note on small sample size}
While data generated in Table \ref{table:typeIerror}a approach the asymptotic limit with regards to sample size ($n = 1000$), it is worth noting that this is rarely the case in real-world data analyses. This is significant, as when sample size is small, it is well documented that the robust sandwich covariance estimator for GEE will underestimate the true variance, thereby inflating the type I error. A variety of adjustments have been proposed to correct for this observed small sample bias \citep{fay2001small, skene2010analysis, mancl2001covariance, pan2002small}; consequently, it is straightforward to adapt our variance estimator to account for small sample bias by calculating $\bm{\hat{\Sigma}}_{k}$ using any one of the aforementioned variance adjustments. Table \ref{table:typeIerror}b illustrates this idea. Data was generated using the same underlying random intercepts model as in Table \ref{table:typeIerror}a and analyzed with an "independence" working correlation structure for MO-B, except the sample size was much smaller ($n = 10$). While the type I error of the stabilized variance estimator is inflated, it is still relatively close to the nominal level. When the bias correction proposed by Mancl and DeRouen is applied to our stabilized variance estimator, we find that the type I error actually falls below the nominal level. These results are not reported for the moment-based estimator, as we find that even with a large $M$, the variance is negative in a significant portion of the simulations.

\subsection{Power}
We also conducted 2,500 simulations under the alternative hypothesis $\beta \neq 0$ for $M = 2500$. We let $\sigma = (1, \sqrt{.8}, \sqrt{.5}, \sqrt{.1})$ and $\tau = (\sqrt{.1}, \sqrt{.5}, \sqrt{.8} , 1)$. We define $\rho$ as the within-cluster correlation, which is equal to $\frac{\tau^2}{\sigma^2 + \tau^2}$. Unlike in Table \ref{table:typeIerror}, we now choose an "exchangeable" working correlation for MO-B. Unsurprisingly, we find that when the data is analyzed with the correct working correlation structure, there are gains to statistical power when B increases. Moreover, the extent of these gains increases as the within-cluster correlation increases. These results validate the findings of \cite{chiang2008efficient}, which suggest that increasing B will result in a more asymptotically efficient variance estimator.

\begin{figure}[ht]
  \centering
  \resizebox{\linewidth}{!}{\input{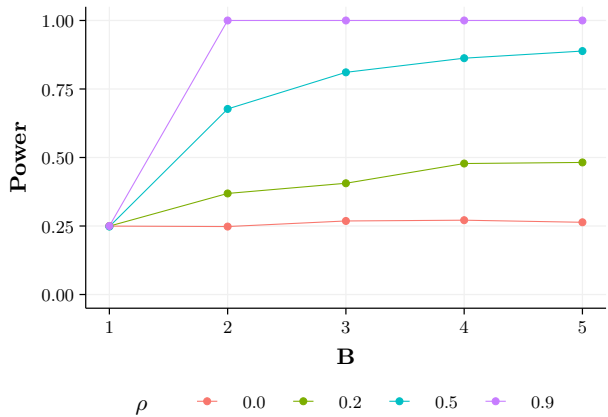}}
  \caption{Comparison of statistical power of MO and MO-B for $2 \leq B \leq 5$. Note that in this figure, $B = 1$ simply refers to the MO case. The data, which consists of $n = 1000$ subjects, was generated by varying the within-cluster correlation, $\rho$. For each condition, 2,500 simulations were conducted ($M = 2500$) under both the null ($\beta = 0$) and alternative hypothesis ($\beta \neq 0$) in order to calculate statistical power using our stabilized variance estimator. Moreover, for each value of $\rho$, an arbitrary critical value was chosen such that for $B = 1$, the power was roughly the same ($\approx0.25$), regardless of the within-cluster correlation.}
  \label{fig:power}
\end{figure}

\section{Real data application}
\begin{table*}[h]
\centering
\resizebox{.92\textwidth}{!}{%
\begin{tabular}{lllrrrrrr}
  \toprule
Response &  &  & Intercept & StdFlightLen & AvgFlightDur & ProbPause & MinsMissing & WkEndDayRtn \\ 
  \midrule
Anxiety & Coefficient & MO & 4.06E+00 & -2.59E-04 & -1.83E-04 & 1.63E+00 & -2.73E-03 & -1.83E+00 \\ 
   &  & MO-2 & 4.58E+00 & -1.77E-04 & -1.42E-04 & 1.36E+00 & -3.08E-03 & -1.59E+00 \\ 
   &  & MO-3 & 4.74E+00 & -1.48E-04 & -9.34E-05 & 1.28E+00 & -3.18E-03 & -1.52E+00 \\ 
   &  & MO-4 & 4.86E+00 & -1.31E-04 & -6.48E-05 & 1.24E+00 & -3.27E-03 & -1.48E+00 \\ 
   &  & MO-5 & 4.92E+00 & -1.19E-04 & -4.66E-05 & 1.22E+00 & -3.31E-03 & -1.46E+00 \\ 
   &  & GEE & 6.11E+00 & -5.53E-05 & 5.23E-07 & 7.21E-01 & -4.21E-03 & -8.47E-01 \\[.3cm]
   & Moment-based variance & MO & 3.72E+01 & 3.72E-07 & 1.58E-06 & 1.60E+00 & 1.93E-05 & 4.11E-01 \\ 
   &  & MO-2 & 1.02E+01 & -3.63E-08 & -2.03E-07 & -2.23E-01 & 4.91E-06 & 2.56E-01 \\ 
   &  & MO-3 & 1.21E+01 & -5.25E-09 & -4.40E-08 & 6.92E-02 & 5.90E-06 & 3.71E-01 \\ 
   &  & MO-4 & 1.28E+01 & 2.20E-09 & -1.60E-08 & 1.62E-01 & 6.25E-06 & 4.13E-01 \\ 
   &  & MO-5 & 1.32E+01 & 4.68E-09 & -6.50E-09 & 2.04E-01 & 6.43E-06 & 4.34E-01 \\ 
   &  & GEE & 1.25E+01 & 3.28E-09 & 1.11E-13 & 1.68E-01 & 5.82E-06 & 1.76E-01 \\[.3cm]
   & Stabilized variance & MO & 2.68E+00 & 4.78E-08 & 3.10E-07 & 2.53E-01 & 1.39E-06 & 6.04E-02 \\ 
   &  & MO-2 & 1.59E+00 & 1.05E-08 & 2.94E-08 & 7.58E-02 & 7.91E-07 & 4.58E-02 \\ 
   &  & MO-3 & 1.93E+00 & 7.76E-09 & 1.09E-08 & 7.20E-02 & 9.53E-07 & 5.99E-02 \\ 
   &  & MO-4 & 2.29E+00 & 6.40E-09 & 5.18E-09 & 7.48E-02 & 1.13E-06 & 7.38E-02 \\ 
   &  & MO-5 & 2.66E+00 & 5.62E-09 & 2.74E-09 & 7.99E-02 & 1.31E-06 & 8.78E-02 \\ 
   &  & GEE & -- & -- & -- & -- & -- & -- \\ [.3cm]
   & BC moment-based variance & MO & -- & -- & -- & -- & -- & -- \\
   & & MO-2 & 3.84E+01 & 4.74E-01 & 3.22E-06 & 2.37E+00 & 1.92E-05 & 1.43E+00 \\
   & & MO-3 & 2.97E+01 & 1.99E-01 & 9.29E-07 & 1.23E+00 & 1.47E-05 & 1.15E+00 \\
   & & MO-4 & 2.67E+01 & 1.12E-07 & 3.78E-07 & 8.86E-01 & 1.32E-05 & 1.04E+00 \\
   & & MO-5 & 2.52E+01 & 7.34E-08 & 1.90E-07 & 7.35E-01 & 1.25E-05 & 9.86E-01 \\
   &  & GEE & 2.04E+01 & 7.19E-09 & 1.74E-10 & 2.79E-01 & 9.52E-06 & 2.81E-01 \\ [.3cm]
   & BC stabilized variance & MO & -- & -- & -- & -- & -- & -- \\
   & & MO-2 & 3.34E+00 & 4.22E-08 & 2.40E-07 & 2.36E-01 & 1.68E-06 & 1.19E-01 \\
   & & MO-3 & 3.56E+00 & 2.66E-08 & 1.01E-07 & 1.80E-01 & 1.77E-06 & 1.32E-01 \\
   & & MO-4 & 4.01E+00 & 2.00E-08 & 5.38E-08 & 1.65E-01 & 1.99E-06 & 1.52E-01 \\
   & & MO-5 & 4.52E+00 & 1.63E-08 & 3.31E-08 & 1.62E-01 & 2.24E-06 & 1.73E-01 \\
   &  & GEE & -- & -- & -- & -- & -- & -- \\
   \bottomrule
\end{tabular}}
\caption{Coefficient and variance estimates for different methods with anxiety as the response variable. Multiple outputation (for all B) was performed with $M = 250,000$ outputations. GEE for MO-B and the full dataset was performed using an "independence" working correlation structure. Where applicable, the bias-correction proposed by Mancl and DeRouen was also employed. Note that even though M is large, the traditional moment-based approach still yielded negative variances. Similar analyses were conducted for the response variables of depression, sleep, and psychosis.}
\label{table:schizoutput}
\end{table*}
\cite{barnett2018relapse} first presented data obtained from seventeen patients being actively treated for schizophrenia at a state mental health clinic in Boston. Patients were followed daily for an average of 2 months, and mobility and social data (e.g. distance travelled, number of places visited, number of texts, number of calls) were passively recorded via the Beiwe app on each subject's personal smartphone. Subjects also self-reported their levels of anxiety, depression, psychosis, and sleep quality; in particular, depression as an outcome has already been shown by \cite{saeb2015mobile} to correlate with mobile phone sensor data.

Using this data, we compared our method of subsampling multiple observations per subject to traditional MO and GEE. Because measures of mobility and social data are often highly correlated with each other, we performed backward selection such that no predictor had a pairwise correlation with another predictor of greater than .5. Additionally, one subject was removed who only had 2 observations. From there, we performed MO and MO-B for $2 \leq B \leq 5$ with $M = 250,000$ using GEE with an "independence" working correlation structure. We also performed GEE on the entire dataset, similarly assuming an "independence" working correlation structure. In all analyses, the response variable is treated as continuous.

The results are shown in Table \ref{table:schizoutput} for anxiety as a response variable. As expected, we saw a close correspondence in the $\beta$'s for each of the various predictors among the different methods, with the exception of AvgFlightDur. Furthermore, though $M$ was quite large, using the moment-based variance estimate still occasionally led to negative variances, presumably due to the large cluster size compared to the small number of subjects. Moreover, assuming that the type I error is largely preserved, we observe that the use of our stabilized variance estimator often outperforms the moment-based estimator.

Given the small sample size, we also performed the analysis with the bias-correction (BC) proposed by Mancl and DeRouen for both the moment-based variance estimator and the stabilized variance estimator. As seen in Table \ref{table:schizoutput}, we find that the bias-correction inflates both the moment-based and stabilized variance estimates, which is as expected. Interestingly, we also see that the order of magnitude of the different estimates aligns closely with the simulation results; the BC moment-based variance estimate is consistently larger than the BC stabilized variance estimate, which may be reflective of the overly conservative type I error associated with the former (Table \ref{table:typeIerror}).

As a final note, we see that the stabilized variance associated with MO-2 is quite consistently smaller than the stabilized variance associated with MO, which could potentially suggest that subsampling more observations per person can increase performance. However, we also recognize that using MWCR without careful consideration of the setting can lead to certain issues; for instance, \citet{pavlou2013examination} describe how bias can arise in the MWCR method when the covariates are non-size balanced.

\section{Discussion}
In summary, we have developed a stabilized novel variance estimator for the WCR and MWCR methods. We have shown that in comparison to the moment-based estimator, our variance estimator 1) is strictly positive and 2) largely preserves the type I error. The fact that our variance estimator is non-negative facilitates the implementation of MO or MO-B, which can already be quite computationally intense. Moreover, the result that our variance estimator largely preserves the type I error is true even in the small-sample setting, in which existing proposed bias-corrections can be easily adapted to our method.

While our method certainly addresses a key limitation of the moment-based estimator, it is important to acknowledge that our method is still computationally intensive. To that note, \citet{chiang2008efficient} have proposed a second estimator, which is asymptotically equivalent to the moment-based estimator and avoids high computational cost. This estimator is analogous to the one associated with the cluster-weighted generalized estimating equation (CWGEE) approach proposed by \citet{williamson2003marginal}. Though we have not investigated this second estimator, in many scenarios, it might be preferable to our proposed variance estimator, simply because of computational cost.

Lastly, though we have applied our method to a real-world example and demonstrated improvement over the traditional moment-based approach, we recognize that in many settings, the outcome may not be continuous. Validating and extending these results to non-continuous (e.g. binary) settings may be of interest in the future.

\backmatter

%  If your paper refers to supplementary web material, then you MUST
%  include this section!!  See Instructions for Authors at the journal
%  website http://www.biometrics.tibs.org

\newpage

%  Here, we create the bibliographic entries manually, following the
%  journal style.  If you use this method or use natbib, PLEASE PAY
%  CAREFUL ATTENTION TO THE BIBLIOGRAPHIC STYLE IN A RECENT ISSUE OF
%  THE JOURNAL AND FOLLOW IT!  Failure to follow stylistic conventions
%  just lengthens the time spend copyediting your paper and hence its
%  position in the publication queue should it be accepted.

%  We greatly prefer that you incorporate the references for your
%  article into the body of the article as we have done here 
%  (you can use natbib or not as you choose) than use BiBTeX,
%  so that your article is self-contained in one file.
%  If you do use BiBTeX, please use the .bst file that comes with 
%  the distribution.  In this case, replace the thebibliography
%  environment below by 
%

\appendix

%  To get the journal style of heading for an appendix, mimic the following.

\label{lastpage}

\end{document}